\newtheorem{definition}{Definition}
\newtheorem{example}{Example}
\newtheorem{lemma}{Lemma}
\newtheorem{theorem}{Theorem}
\title{Critical Peaks Redefined\\
$\clseq{\clslub{\aStp}{\bStp}}{\clstop}$}
\author{Nao Hirokawa\inst{1}
\and
Julian Nagele\inst{2}
\and
Vincent van Oostrom\inst{2}
\and
Michio Oyamaguchi\inst{3}}
\institute{
  School of Information Science, JAIST, Japan\\
  \email{hirokawa@jaist.ac.jp}
\and
  Department of Computer Science, University of Innsbruck, Austria\\
  \email{julian.nagele@uibk.ac.at}, \email{Vincent.van-Oostrom@uibk.ac.at}
\and
  Nagoya University, Japan\\
  \email{oyamaguchi@za.ztv.ne.jp}
}
\titlerunning{Critical Peaks Redefined:
$\clseq{\clslub{\aStp}{\bStp}}{\clstop}$}
\authorrunning{Hirokawa, Nagele, van Oostrom, and Oyamaguchi}
\begin{document}

\maketitle

\begin{abstract}
  Let a \emph{cluster} be a term with a number of patterns occurring in it.
  We give two accounts of clusters,
  a \emph{geometric} one as sets of (node and edge) positions, and
  an \emph{inductive} one as pairs of terms with gaps ($2$nd order variables)
  and pattern-substitutions for the gaps.
  We show both notions of cluster and the corresponding 
  refinement/coarsening orders on them, to be isomorphic.
  This equips clusters with a lattice structure which
  we lift to (parallel/multi) steps to yield 
  an alternative account of the notion 
  of critical peak.
\end{abstract}
    
\section{Introduction}

The critical pair lemma~\cite{Huet:80} is the cornerstone for proving 
confluence of first-order term rewrite systems. 
It expresses that a term rewrite system is locally confluent if and only if all
its critical pairs are joinable. In case the system is moreover terminating this
allows to reduce, by Newman's Lemma~\cite{Newm:42}, checking confluence to
checking joinability of its critical pairs, which are finitely many in the 
case of a finite term rewrite system. This forms the basis for Knuth--Bendix completion.
The termination condition cannot be omitted without more from the critical pair lemma:
On the one hand, a non-terminating TRS may fail to be confluent even in the
absence of critical pairs due to non-left-linearity, as established by Klop.
On the other hand, a non-terminating left-linear TRS may still fail to 
be confluent despite that all its critical pairs are joinable.
Still, for orthogonal, i.e.\ left-linear and without critical pairs, TRSs
confluence does hold for \emph{geometric} reasons:
redex-patterns can be contracted independently 
of each other, inducing a notion of residual.
Starting with Church and Rosser a rich theory of residuals
has been developed, but comparatively little attention has been
paid to the result that lies at its basis:
a strengthening of the critical pair lemma stating
that any peak \emph{either} is (a variable-instance of) a 
critical peak \emph{or} can be decomposed into smaller peaks.
We present such a \emph{critical peak }lemma.

Since both for defining rewriting and for defining critical peaks
the notion of \emph{encompassment} is essential, we start off with analysing it.
In particular, we call a term with a number of patterns (think of left-hand sides of rules)
encompassed by it a \emph{cluster}, and introduce two representations of clusters:
a \emph{geometric} one as sets of (node and edge) positions, and
an \emph{inductive} one as pairs of terms with gaps ($2$nd order variables)
and pattern-substitutions for the gaps.
One can think of these two representations as corresponding to
the pictures respectively the formal proof of the critical pair
lemma as found in e.g.~\cite{Baad:Nipk:98,Ohle:02,Tere:03}.
Here we give formal accounts of \emph{both} and of the 
refinement/coarsening order on them, and show them to be isomorphic.
This allows one to bridge the gap between the often informal
geometric intuition (`proofs by picture') at the basis of properties
of residuals, and the inductive nature of (`terms and steps') of term 
rewriting.
As a first example (we anticipate many more) we redefine 
in this paper the  notion of critical peak in a purely lattice theoretic way, 
based just on the coarsening/refinement order of clusters.
More precisely, we call a local peak between steps $\astp$ and $\bstp$
\emph{critical} if $\clseq{\clslub{\astp}{\bstp}}{\clstop}$,
or in words, the redex-patterns of $\astp$ and $\bstp$ must be
overlapping (if not, $\clslt{\clslub{\astp}{\bstp}}{\clstop}$ 
as the join would `miss an edge' and comprise \emph{two} patterns)
and all symbols in the source of the peak must be
part of either pattern (if not, $\clslt{\clslub{\astp}{\bstp}}{\clstop}$
as the join would `miss a node' and not comprise \emph{all} symbols).
This definition captures the intuition behind something
being \emph{critical}: the source does encompass both redex-patterns
but only just so, nothing else is encompassed.

\textbf{We restrict ourselves to first-order term rewriting
and to linear patterns.}

\section{Clusters and the refinement lattice}

We introduce clusters and the refinement order on them.
It is standard to represent terms as labelled trees
using sequences of positive natural numbers called \emph{positions}
for the nodes of the tree.
Following~\cite[Chapter~8]{Tere:03} we extend this 
by having both \emph{vertex} and \emph{edge} positions,
see Figure~\ref{fig:vertexedge}{(a)}.
Note that $\avtxidx$ and $\aedgidx$ stand for
$\possco{\aidx}{\vposemp}$ and $\possco{\aidx}{\eposemp}$, respectively.
Let $\possco{\aidx}{\aPos}$ denote
$\{ \possco{\aidx}{\apos} \mid \setin{\apos}{\aPos} \}$.
\begin{figure}
\begin{minipage}[b]{.31\textwidth}
\def\afig{$\possco{\posone}{\possco{\posone}{\vposone}}$}
\def\bfig{$\possco{\posone}{\possco{\posone}{\eposone}}$}
\def\cfig{$\possco{\posone}{\vposone}$}
\def\dfig{$\possco{\posone}{\vpostwo}$}
\def\efig{$\possco{\posone}{\eposone}$}
\def\ffig{$\possco{\posone}{\epostwo}$}
\def\gfig{$\eposemp$}
\def\hfig{$\eposone$}
\def\ifig{$\vposemp$}
\def\jfig{$\vposone$}
 \centering\input{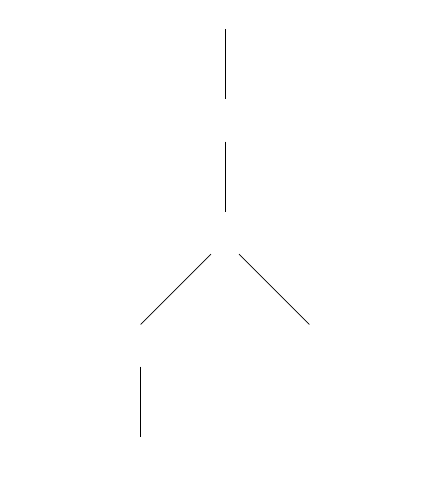_t}
(a) Vertex/edge positions~\cite{Tere:03}
\end{minipage}%
\begin{minipage}[b]{.69\textwidth}
\[
\begin{array}{@{}l@{~}l@{}l@{}}
\hline
\rule{0mm}{1.3em}
\emptyset
&
\pair{\syma{\symb{\symc{\bsymzer},\asymzer}}
     }{& \llbracket\,\rrbracket}
= \clsbot
\\[0.3em]
\setstr{\vposemp}
&
\pair{\mtrmvara{\symb{\symc{\bsymzer},\asymzer}}
     }{& \msubstr{\amtrmvar}{\syma{\trmvar{\natone}}}
      }
\\[0.3em]
\setstr{\vposone,\possco{\posone}{\vposone}}
&
\pair{\syma{\mtrmvara{\mtrmvarb{\bsymzer},\asymzer}}
     }{& \msubstr{\amtrmvar,\bmtrmvar}{\symb{\trmvar{\natone},\trmvar{\nattwo}},
                                     \symc{\trmvar{\natone}}}
      }
\\[0.3em]
\setstr{\vposone,\possco{\posone}{\eposone},\possco{\posone}{\vposone}}
&
\pair{\syma{\mtrmvara{\bsymzer,\asymzer}}
     }{& \msubstr{\amtrmvar}{\symb{\symc{\trmvar{\natone}},\trmvar{\nattwo}}}
      }
\\[0.3em]
\setstr{\vposone,\possco{\posone}{\epostwo},\possco{\posone}{\vpostwo}}
&
\pair{\syma{\mtrmvara{\symc{\bsymzer}}}
     }{& \msubstr{\amtrmvar}{\symb{\trmvar{\natone},\asymzer}}
      }
\\[0.3em]
\left\{
\begin{matrix}
\vposemp,
\eposone,
\vposone,
\possco{\posone}{\eposone},
\possco{\posone}{\vposone},
\\
\possco{\posone}{\possco{\posone}{\eposone}},
\possco{\posone}{\possco{\posone}{\vposone}},
\possco{\posone}{\epostwo},
\possco{\posone}{\vpostwo}
\\
\end{matrix}
\right\}
&
\pair{\amtrmvar}
     {& \msubstr{\amtrmvar}{\syma{\symb{\symc{\bsymzer},\asymzer}}}}
= \clstop
\\[1em]
\hline
\end{array}
\]
\centering (b) Geometric and inductive clusters.
\end{minipage}%
  \caption{Positions and clusters for $\syma{\symb{\symc{\bsymzer},\asymzer}}$.}
  \label{fig:vertexedge}
\end{figure}

\begin{definition}
  The $\salgtre$ algebra has as carrier sets of positions, and interpretations
  \[ \funap{\op{\asym}{\salgtre}}{\vec{\aPos}} \isdefd 
       \setlub{\setstr{\edgpos{\posemp},\vtxpos{\posemp}}
             }{\setLub{\aidx}{\possco{\aidx}{\aiPos{\aidx}}}
             } \]
\end{definition}
The set of all positions of a term arises by assigning 
$\setstr{\edgpos{\posemp},\vtxpos{\posemp}}$ to variables.
We are interested in the \emph{internal} positions arising from
assigning $\setemp$ to variables and removing the root edge $\edgpos{\posemp}$.
\begin{definition}
  A \emph{cluster} for a given term $\atrm$
  is a set of internal positions of $\atrm$ such that
  if an edge $\possco{\apos}{\aedgidx}$ is in the set,
  its endpoints $\possco{\apos}{\vtxpos{\posemp}}$ and $\possco{\apos}{\avtxidx}$
  are too.
  Its connected components are called \emph{patterns}.
\end{definition}
Our patterns correspond to those in~\cite[Chapter~8]{Tere:03}.
\begin{example} \label{exa:gcluster}
  The first column of Figure~\ref{fig:vertexedge}(b) lists
  some clusters for the term $\syma{\symb{\symc{\bsymzer},\asymzer}}$,
  of which the first and third are not patterns (for the latter,
  $\possco{\posone}{\eposone}$ is missing).
  Since $\setstr{\vposone,\possco{\posone}{\eposone}}$ 
  lacks the endpoint $\possco{\posone}{\vposone}$ of
  the edge $\possco{\posone}{\eposone}$,
  it is not a cluster.
  Note that 
  $\setstr{\vposone,\possco{\posone}{\eposone},\possco{\posone}{\vposone}} \neq
   \setstr{\vposone,\possco{\posone}{\vposone}}$;
  the former is a cluster comprising a single pattern, whereas
  the latter comprises two patterns.
\end{example}
\begin{lemma}
  For any term, the clusters for that term constitute a 
  finite distributive lattice with respect to the subset order $\ssetle$.
\end{lemma}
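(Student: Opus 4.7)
The plan is to exhibit the clusters for $\atrm$ as a sublattice of the powerset lattice of internal positions of $\atrm$ ordered by $\ssetle$, and to inherit distributivity from it. Finiteness is immediate because $\atrm$ has only finitely many positions. The bottom element $\clsbot$ is witnessed by $\setemp$ (vacuously a cluster) and the top element $\clstop$ by the set of all internal positions of $\atrm$, which is a cluster because every edge position it contains has its endpoints in the same set as well.

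The main step is to verify that the cluster property is preserved under binary intersection and binary union. Let $\aPos$ and $\bPos$ be clusters for $\atrm$. For intersection, suppose $\setin{\possco{\apos}{\aedgidx}}{\setglb{\aPos}{\bPos}}$; then this edge position lies in both $\aPos$ and $\bPos$, so by the cluster property of each its endpoints $\possco{\apos}{\vposemp}$ and $\possco{\apos}{\avtxidx}$ lie in both, hence in $\setglb{\aPos}{\bPos}$. For union, suppose $\setin{\possco{\apos}{\aedgidx}}{\setlub{\aPos}{\bPos}}$; then this edge position lies in $\aPos$ or in $\bPos$, and in either case the cluster property of the relevant one places the endpoints into $\setlub{\aPos}{\bPos}$. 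Hence the clusters form a sublattice of the powerset, and every sublattice of a distributive lattice is itself distributive.

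Since the lemma follows by elementary sublattice reasoning, I do not anticipate a substantial obstacle. The only subtlety is the observation that the defining closure condition on clusters is a positive Horn-like condition (``if the edge is in, then the endpoints are in''), so it is automatically preserved under both intersection and union; if the condition instead involved any negative information about membership it could fail under union, and it is this positive form that ultimately underpins the lattice structure.
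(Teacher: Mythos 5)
Your proof is correct and is essentially an expanded version of the paper's own (very terse) argument: finiteness comes from the term being finite, and the lattice structure with distributivity is inherited from the powerset order, since the edge--endpoint closure condition is a Horn condition preserved under both union and intersection. Nothing further is needed.
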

\begin{proof}
  By terms being finite, and
  properties being inherited from the subset order.
  \end{proof}
We give an alternative definition of clusters.
To keep both apart, we will refer to the above notions
as \emph{geometric} and to the ones introduced 
below as \emph{inductive}.
\begin{definition}
  A \emph{skeleton} is constructed from function symbols
  and $1$st and $2$nd order variables, the latter called \emph{gaps}.
  It is a \emph{pattern-}skeleton of \emph{arity} $\anat$,
  if it is not a variable and \emph{standard}: the vector of variables occurring 
  from left to right is $\trmvar{\natone},\ldots,\trmvar{\anat}$.
  A \emph{term} is a skeleton without gaps,
  and a \emph{pattern} is a pattern-skeleton without gaps.
  A \emph{cluster} is a pair $\pair{\amtrm}{\msubstr{\vec{\amtrmvar}}{\vec{\apat}}}$
  with $\amtrm$ a skeleton linear in the gaps,
  and  $\msubstr{\vec{\amtrmvar}}{\vec{\apat}}$ 
  substituting patterns $\vec{\apat}$ for those, respecting arities.
  We say $\pair{\amtrm}{\msubstr{\vec{\amtrmvar}}{\vec{\apat}}}$
  is a cluster \emph{for} the term
$\msubstra{\vec{\amtrmvar}}{\vec{\apat}}{\amtrm}$.
\end{definition}
\begin{example} \label{exa:icluster}
  The inductive clusters corresponding to the six geometric clusters of 
  Example~\ref{exa:gcluster} are
  listed in the second column of Figure~\ref{fig:vertexedge}(b).
  The inequality in that example corresponds to the inequality
  $\clsneq{\pair{\syma{\mtrmvara{\bsymzer,\asymzer}}
               }{\msubstr{\amtrmvar}{\symb{\symc{\trmvar{\natone}},\trmvar{\nattwo}}}
               }
         }{\pair{\syma{\mtrmvara{\mtrmvarb{\bsymzer},\asymzer}}
               }{\msubstr{\amtrmvar,\bmtrmvar}{\symb{\trmvar{\natone},\trmvar{\nattwo}},
                                       \symc{\trmvar{\natone}}}
               }
         }$.
\end{example}
\begin{definition}
  The \emph{coarsening} order $\sclsle$ (the \emph{refinement} order $\sclsge$)
  on inductive clusters is defined by
  $\clsge{\clsstr{\bmtrm}{\bmsub}}{\clsstr{\amtrm}{\amsub}}$ if
  $\trmeq{\msubc{\bmtrm}}{\amtrm}$ and $\subeq{\bmsub}{\subcom{\cmsub}{\amsub}}$ 
  for some pattern-skeleton substitution $\cmsub$.
\end{definition}
\begin{example} \label{exa:refinement}
   For the term $\syma{\asymzer}$ the pattern comprising
   both symbols may be refined to the cluster comprising two patterns:
   $\clsgt{\pair{\cmtrmvar
                }{\msubstr{\cmtrmvar}{\syma{\asymzer}}
                }
          }{\pair{\mtrmvara{\bmtrmvar}
               }{\msubstr{\amtrmvar,\bmtrmvar}{\syma{\trmvar{\natone}},\asymzer}
               }
          }$ witnessed by the pattern-skeleton substitution
   $\msubstr{\cmtrmvar}{\mtrmvara{\bmtrmvar}}$.
   Geometrically this corresponds to
   $\setgt{\setstr{\vposemp,\eposone,\vposone}
         }{\setstr{\vposemp,\vposone}
         }$.
\end{example}
The main result of this section can be viewed as an instance of Birkhoff's Fundamental
Theorem of Finite Distributive Lattices, expressing that \emph{all} such 
lattices can be represented via downward-closed sets of join-irreducible 
elements ordered by subset.
\begin{theorem}
  For a given term, 
  geometric clusters ordered by $\ssetle$ are isomorphic to 
  inductive clusters, up to renaming of gaps, ordered by $\sclsle$.
  The order is a finite distributive lattice.
\end{theorem}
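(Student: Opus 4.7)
The plan is to construct an explicit map $\Phi$ from geometric clusters for $\atrm$ to inductive clusters for $\atrm$ modulo renaming of gaps, exhibit an inverse $\Psi$, and verify that both are order-preserving; the finite distributive lattice structure then transfers from the geometric side, already established by the preceding Lemma.

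Given a geometric cluster $\aPos$, I would define $\Phi(\aPos)$ by identifying its \emph{pattern-roots}: those vertex positions in $\aPos$ whose incoming edge lies outside $\aPos$ (in particular $\vposemp$ if present in $\aPos$). The connected components of $\aPos$ are then in bijection with the pattern-roots. The skeleton $\amtrm$ is obtained from $\atrm$ by keeping the function symbol at each position strictly above all components, placing a fresh gap $\amtrmvar_\apos$ at each pattern-root $\apos$ of arity equal to the number of leaves of its component, and attaching as arguments to $\amtrmvar_\apos$ the subterms of $\atrm$ hanging from those leaves in left-to-right order. The pattern-substitution $\amsub$ sends $\amtrmvar_\apos$ to the pattern read off the component: function symbols on component-vertices, with $\trmvar{\natone},\trmvar{\nattwo},\ldots$ at the leaves in left-to-right order. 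The inverse $\Psi(\amtrm,\amsub)$ returns, for each gap $\amtrmvar$ of $\amtrm$ occurring at position $\apos$, the internal vertex and edge positions of $\amsub(\amtrmvar)$ shifted by $\apos$, taking the union over all gaps.

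I would then verify $\Psi\circ\Phi=\mathrm{id}$ and $\Phi\circ\Psi=\mathrm{id}$ up to gap-renaming by a direct position-walk, using the linearity of skeletons in their gaps and the canonical left-to-right numbering of first-order variables in patterns to render the recovery unambiguous. For order preservation the interesting direction is $\setle{\aiPos{\natone}}{\aiPos{\nattwo}} \Rightarrow \clsle{\Phi(\aiPos{\natone})}{\Phi(\aiPos{\nattwo})}$: each component of $\aiPos{\natone}$ is contained in exactly one component of $\aiPos{\nattwo}$, and each $\aiPos{\nattwo}$-component is decomposed by the $\aiPos{\natone}$-components it contains together with the $\aiPos{\nattwo}$-edges joining them. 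This decomposition defines a pattern-skeleton substitution $\cmsub$ mapping each gap of the $\aiPos{\nattwo}$-skeleton to the pattern-skeleton obtained by taking its associated $\aiPos{\nattwo}$-pattern and marking each contained $\aiPos{\natone}$-component as a gap while leaving non-component subterms as first-order variables; routine bookkeeping then yields the two equations required by the definition of $\sclsge$. The converse follows by running the construction backwards: any witness $\cmsub$ of a coarsening necessarily splits $\aiPos{\nattwo}$-components into $\aiPos{\natone}$-ones, so inclusion at the geometric level is forced.

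Finally, since geometric clusters under $\ssetle$ form a finite distributive lattice by the preceding Lemma, transporting along the order-isomorphism $\Phi$ yields the same structure for inductive clusters under $\sclsle$. The pointer to Birkhoff mentioned above manifests naturally: the join-irreducibles on the geometric side are exactly the clusters consisting of a single vertex (possibly together with its incoming edge), and these match the atomic refinement moves on the inductive side. I expect the main obstacle to be the explicit construction of $\cmsub$ in the order-preservation argument, where one must simultaneously track arities, linearity in gaps, and the left-to-right ordering of first-order variables across the partition.
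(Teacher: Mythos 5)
Your proposal is correct and takes essentially the same route as the paper: your inverse $\Psi$ (union over gaps of the pattern's internal positions shifted by the gap's position) is precisely what the paper's cluster algebra $\pair{\salgshi}{\salgtre}$ computes, and your order-preservation argument via the witnessing pattern-skeleton substitution read off from the containment of connected components is the combinatorial rendering of the paper's two directions (by induction with left-quotienting, respectively algebraically via a substitution lemma). One non-load-bearing slip in your Birkhoff aside: the join-irreducibles are the single-vertex clusters \emph{and} the clusters consisting of one edge together with \emph{both} its endpoints, not a vertex ``possibly with its incoming edge'' (the latter would violate the endpoint-closure condition for clusters).
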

\begin{proof}
  First note that we can map any inductive cluster 
  $\pair{\amtrm}{\msubstr{\vec{\amtrmvar}}{\vec{\apat}}}$ 
  to a geometric cluster by means of what we call a \emph{cluster}
  algebra, a pair of algebras for interpreting both components:
    \begin{itemize}
    \item
      $\salgshi$
      for interpreting $\amtrm$:
      $\funap{\op{\asym}{\salgshi}}{\vec{\aPos}} \isdefd 
       \setLub{\aidx}{\possco{\aidx}{\aiPos{\aidx}}}$; and
    \item
      $\salgtre$
      for interpreting $\vec{\amtrmvar}$ via $\vec{\apat}$:
      $\funap{\op{\asym}{\salgtre}}{\vec{\aPos}} \isdefd 
       \setlub{\setstr{\edgpos{\posemp},\vtxpos{\posemp}}
             }{\funap{\op{\asym}{\salgshi}}{\vec{\aPos}}
             }$, then removing the root edge.
    \end{itemize}
  This map is seen to be a bijection. That it preserves the order is seen:
  \begin{list}{}{}
  \item[(geometric $\Rightarrow$ inductive)]
    by induction on the term, simultaneously building the inductive clusters
    from both geometric clusters and the witnessing pattern-skeleton substitution,
    using that geometric clusters are preserved under left-quotienting by argument positions.
  \item[(inductive $\Rightarrow$ geometric)]
    algebraically, using a substitution lemma and 
    that the $\salgtre$-in\-ter\-pre\-ta\-tion contains the $\salgshi$-interpretation; \qedhere
  \end{list}
\end{proof}
\begin{example}
   $\seteq{\op{\pair{\syma{\mtrmvara{\bsymzer,\asymzer}}
               }{\msubstr{\amtrmvar}{\symb{\symc{\trmvar{\natone}},\trmvar{\nattwo}}}
               }
              }{\pair{\salgshi}{\salgtre}
              }
   }{\possco{\posone}{(\setdif{\op{\ip{\symb{\symc{\trmvar{\natone}},\trmvar{\nattwo}}
                                            }{[\trmvar{\natone},\trmvar{\nattwo} \mapsto
                                               \setemp,\setemp]                                               
                                            }}{\salgtre}
                                     }{\setstr{\edgpos{\posemp}}
                                     })}} \mathrel{\sseteq} $\hfill\\
   $\seteq{\possco{\posone}{(\setdif{\setstr{\edgpos{\posemp},\vtxpos{\posemp},
                                              \edgpos{\posone},\vtxpos{\posone}}
                                    }{\setstr{\edgpos{\posemp}}
                                    })}
   }{\seteq{\possco{\posone
                  }{\setstr{\vtxpos{\posemp},\edgpos{\posone},\vtxpos{\posone}}
                  }
          }{\setstr{\vtxpos{\posone},
                    \possco{\posone}{\edgpos{\posone}},
                    \possco{\posone}{\vtxpos{\posone}}}
          }}$.
\end{example}
Equipping clusters with the lattice operators
$\clstop$, $\clsbot$, $\sclsglb$, $\sclslub$, the 
theorem shows the join-irreducible elements of the refinement order
can be perceived as vertices (patterns comprising a single function symbol)
and edges (patterns comprising two function symbols), which can be seen
as justifying having both types of positions:
an edge is \emph{more} than the join of its endpoints.

\section{Critical peaks redefined}

We first redefine single/parallel/multi-steps in first-order term rewriting
\emph{by second-order means} as clusters with rule symbols~\cite{Tere:03} in patterns,
and next critical peaks via the clusters of its steps.

\begin{lemma}
  For a left-linear TRS, 
    $\arsa{\atrm}{\btrm}$ iff
    $\trmeq{\atrm}{\msubstra{\amtrmvar}{\alhs}{\amtrm}}$ and
    $\trmeq{\msubstra{\amtrmvar}{\arhs}{\amtrm}}{\btrm}$
    for some skeleton $\amtrm$ 
    and pattern substitution $\msubstr{\amtrmvar}{\alhs}$,
    for rule $\rulstr{\alhs}{\arhs}$ with $\alhs$ standard.
\end{lemma}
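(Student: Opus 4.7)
The plan is to give an explicit two-sided translation between the conventional description of a rewrite step---a context, a rule, and a first-order substitution for the rule's variables---and the skeleton/pattern-substitution description in the statement. The bridge is the following identification: the single gap $\amtrmvar$ of $\amtrm$ marks the position at which the redex is contracted, its arity $\anat$ matches the number of variables of $\alhs$, and its $\anat$ arguments record the first-order images of $\trmvar{\natone},\ldots,\trmvar{\anat}$ under $\asub$.

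For the forward direction, assume $\arsa{\atrm}{\btrm}$ with $\trmeq{\atrm}{\ctxa{\suba{\alhs}}}$ and $\trmeq{\ctxa{\suba{\arhs}}}{\btrm}$, for some context $\actx$, rule $\rulstr{\alhs}{\arhs}$, and first-order substitution $\asub$. By left-linearity we may rename so that $\alhs$ is standard; write $\citrm{\aidx}$ for $\suba{\trmvar{\aidx}}$. Define $\amtrm$ by plugging $\amtrmvar(\citrm{\natone},\ldots,\citrm{\anat})$---with $\amtrmvar$ a fresh gap of arity $\anat$---into the hole of $\actx$. Then $\amtrm$ is linear in $\amtrmvar$, which occurs exactly once; unfolding the 2nd-order substitution gives $\msubstra{\amtrmvar}{\alhs}{\amtrm} = \ctxa{\suba{\alhs}} = \atrm$, and analogously $\msubstra{\amtrmvar}{\arhs}{\amtrm} = \btrm$.

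For the backward direction, the skeleton $\amtrm$ is linear in its sole gap $\amtrmvar$, which we may assume actually occurs in $\amtrm$---otherwise $\trmeq{\atrm}{\btrm}$, and we are characterising proper $\arsa$-steps. Reading off the unique position of $\amtrmvar$ in $\amtrm$ together with its $\anat$ argument terms $\citrm{\natone},\ldots,\citrm{\anat}$---which are first-order because $\amtrmvar$ is the only gap---take $\actx$ to be $\amtrm$ with $\amtrmvar(\citrm{\natone},\ldots,\citrm{\anat})$ replaced by the hole $\ctxemp$, and $\asub$ the first-order substitution sending $\trmvar{\aidx}$ to $\citrm{\aidx}$. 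The forward calculation then reverses to $\trmeq{\ctxa{\suba{\alhs}}}{\atrm}$ and $\trmeq{\ctxa{\suba{\arhs}}}{\btrm}$, witnessing $\arsa{\atrm}{\btrm}$.

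The only real obstacle is the bookkeeping equating the first-order substitution $\asub$ of the rule's variables with the 2nd-order pattern substitution $\msubstr{\amtrmvar}{\alhs}$ of the gap: both produce $\suba{\alhs}$ at the redex position, since the indexing of $\amtrmvar$'s arguments is set up to match that of the variables in $\alhs$. This alignment works uniformly because $\alhs$ is standard---an assumption that is harmless precisely thanks to left-linearity, whence every LHS really is a pattern in the sense of Definition~3.
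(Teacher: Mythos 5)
Your proposal is correct and takes essentially the same route as the paper's own (one-line) proof: the paper sets $\amtrm \isdefd \ctxa{\mtrmvara{\suba{\vec{\strmvar}}}}$ for the forward direction ``and vice versa'', which is exactly your construction of the skeleton from the context and first-order substitution, and your reading-off of the context and substitution from the gap's unique occurrence for the converse. Your additional bookkeeping (standardness of $\alhs$, the occurrence of the gap, and the alignment of argument indices with variable indices) just makes explicit what the paper leaves implicit.
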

\begin{proof}
    If $\trmeq{\atrm}{\ctxa{\suba{\alhs}}}$ and
    $\trmeq{\ctxa{\suba{\arhs}}}{\btrm}$,
    then set 
    $\amtrm \isdefd \ctxa{\mtrmvara{\suba{\vec{\strmvar}}}}$
    and vice versa. 
\end{proof}
Turning rules into rule symbols, the lemma justifies representing a step
$\arsa{\atrm}{\btrm}$ as a cluster $\phi$ having the rule symbol as pattern
substitution.  We denote it by $\iarsa{\astp}{\atrm}{\btrm}$.
\begin{example}~\label{exa:step}
 The step $\arsa{\syma{\symap{\asym}{\syma{\asymzer}}}
         }{\syma{\symb{\syma{\asymzer},\syma{\asymzer}}}
         }$
 for the rule
  $\rula{\trmvar{\natone}} \hastype 
    \rulstr{\syma{\trmvar{\natone}}
          }{\symb{\trmvar{\natone},\trmvar{\natone}}
          }$
  can be represented by the cluster
  $\pair{\syma{\mtrmvara{\syma{\asymzer}}}
         }{\msubstr{\amtrmvar}{\rula{\trmvar{\natone}}}
         }$: projecting the rule $\arul$ in the substitution 
  to its left/right-hand side
  $\msubstr{\amtrmvar}{\syma{\trmvar{\natone}}}$/
  $\msubstr{\amtrmvar}{\symb{\trmvar{\natone},\trmvar{\natone}}}$
  yields the step.
\end{example}
We now use the lattice to measure the interaction between 
steps in peaks.\footnote{%
The lattice structure on clusters does, in itself, not give rise
to a lattice structure on rules/steps/reductions.}
Note that the top element $\clstop$ for an $\anat$-ary pattern $\alhs$
has shape
$\pair{\mtrmvara{\trmvar{\natone},\ldots,\trmvar{\anat}}
                     }{\msubstr{\amtrmvar}{\alhs}
                     }$.
\begin{definition}
  A local peak
  $\iarsinva{\astp}{\btrm}{\iarsa{\bstp}{\atrm}{\ctrm}}$ 
  is \emph{critical} if 
  $\clseq{\clslub{\astp}{\bstp}}{\clstop}$ with $\atrm$ standard,
  where we extend the refinement order to steps via their left-hand side.
\end{definition}
\begin{example}
   Consider the (standard) rules $\rulstr{\syma{\trmvar{\natone}}}{\trmvar{\natone}}$
   and $\rulstr{\syma{\trmvar{\natone}}}{\asymzer}$.
     \begin{itemize}
     \item
       $\arsinva{\trmvar{\natone}}{\arsa{\syma{\trmvar{\natone}}}{\asymzer}}$
       is critical since the union of the redex-patterns 
       is $\setstr{\vposemp}$,
       the set of all internal positions of $\syma{\trmvar{\natone}}$;
     \item
       $\arsinva{\bsymzer}{\arsa{\syma{\bsymzer}}{\asymzer}}$
       is not critical since the union of the redex-patterns
       is $\setstr{\vposemp}$, which is distinct from the set
       $\setstr{\vposemp,\eposone,\vposone}$ of all internal positions of $\syma{\bsymzer}$; and
     \item
       $\arsinva{\asymzer}{\arsa{\syma{\syma{\trmvar{\natone}}}}{\syma{\asymzer}}}$
       is not critical since the union of the redex-patterns 
       $\setstr{\vposemp,\vposone}$ misses the internal position $\eposone$ of
       $\syma{\syma{\trmvar{\natone}}}$.
     \end{itemize}
\end{example}
\begin{lemma}
  The definition of critical peak for a pair of rules is equivalent to the definitions
  found in the literature, up to 
  \emph{most generalness} (unifier or common instance),
  \emph{chiasmus} ($1$st rule--$2$nd rule vs.\ $2$nd rule--$1$st rule),
  \emph{order} (outer--inner vs.\ inner--outer),
  \emph{renaming} (variables in the peak), and
  \emph{triviality} (overlap of a rule with itself at the root).
\end{lemma}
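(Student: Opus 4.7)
The plan is to compare the $\clseq{\clslub{\astp}{\bstp}}{\clstop}$ formulation directly with the classical recipe from \cite{Huet:80,Baad:Nipk:98,Tere:03}: for rules $\rulstr{\ailhs{1}}{\airhs{1}}$ and $\rulstr{\ailhs{2}}{\airhs{2}}$ with disjoint variables, pick a non-variable position $\apos$ of $\ailhs{1}$, take an mgu $\asub$ of $\ailhs{1}|_\apos$ and $\ailhs{2}$, and form the peak $\arsinva{\suba{\airhs{1}}}{\arsa{\suba{\ailhs{1}}}{\suba{\ailhs{1}}[\suba{\airhs{2}}]_\apos}}$, excluding the case $\ailhs{1}\trmeq\ailhs{2}$ with $\apos\trmeq\posemp$ (triviality).

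For $(\Rightarrow)$, I would represent the classical critical peak's outer step $\astp$ (at the root, rule~1, matcher $\asub$) and inner step $\bstp$ (at $\apos$, rule~2, matcher $\asub$) as clusters, and check $\clseq{\clslub{\astp}{\bstp}}{\clstop}$ via the geometric isomorphism of the previous theorem. Concretely, positions inside $\ailhs{1}$'s tree shape lie in $\astp$'s cluster, and positions of $\ailhs{2}$'s tree shape shifted by $\apos$ lie in $\bstp$'s cluster; it remains to show that no internal position of $\suba{\ailhs{1}}$ escapes both. Any such position would lie strictly below a variable of $\ailhs{1}$ at some $\bpos$, and the mgu property rules this out: if $\bpos\parallel\apos$ then $\asub$ leaves that variable alone and no such position exists in $\suba{\ailhs{1}}$; if $\bpos > \apos$ then either $\ailhs{2}$ has a non-variable at $\bpos-\apos$, so the position lies in $\bstp$'s cluster, or $\ailhs{2}$ too has a variable there and the mgu collapses both to a fresh variable, leaving no such position.

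For $(\Leftarrow)$, I would start from $\iarsinva{\astp}{\btrm}{\iarsa{\bstp}{\atrm}{\ctrm}}$ with $\atrm$ standard and $\clseq{\clslub{\astp}{\bstp}}{\clstop}$, and locate the positions $\apos_\astp,\apos_\bstp$ of the two steps. Since the root vertex $\vposemp$ must be covered and the support of each cluster is rooted at its step position, at least one of $\apos_\astp,\apos_\bstp$ equals $\posemp$; up to \emph{chiasmus} (swap rules) and \emph{order} (swap steps) I may assume $\apos_\astp\trmeq\posemp$ and $\apos_\bstp\trmeq\apos$. After apart-renaming the variables of the two rules, the matchers $\asub$ at the root and $\bsub$ at $\apos$ combine into a unifier $\sublub{\asub}{\bsub}$ of $\ailhs{1}|_\apos$ and $\ailhs{2}$; factoring $\sublub{\asub}{\bsub}\trmeq\dsub\circ\csub$ through the mgu $\csub$, the $\clstop$-condition forces $\dsub$ to be a renaming, recovering the classical critical peak up to \emph{most generalness}, \emph{renaming}, and (in the same-rule overlap-at-root case) \emph{triviality}.

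The main obstacle will be the position-counting in $(\Rightarrow)$: verifying that the mgu property precisely leaves no leftover internal positions. Dually, in $(\Leftarrow)$, I need to argue that any extra substitution beyond the mgu would add structure strictly below a variable of each of $\ailhs{1}$ and $\ailhs{2}$ simultaneously, introducing positions covered by neither cluster. A clarifying case is $\apos_\astp\parallel\apos_\bstp$: the root vertex then lies in neither cluster, so parallel-position peaks are automatically non-critical, matching the classical requirement that overlaps occur along a single chain from the root.
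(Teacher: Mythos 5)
Your proposal is correct in substance, but it is doing genuinely more work than the paper, whose entire proof of this lemma is the one-sentence observation that the literature's definitions vary along the listed parameters and \emph{implement} the abstract notion; no unification-theoretic verification is carried out there. What you supply is essentially the missing content: a two-way translation between $\clseq{\clslub{\astp}{\bstp}}{\clstop}$ and the classical mgu-based recipe, with the five parameters absorbed exactly where you place them (chiasmus/order in choosing which step is the root step, renaming in standardness of the source, most generalness in the factorisation through the mgu, triviality as the excluded root self-overlap). Two points need to be made precise if you write this out. First, your position count in the $(\Rightarrow)$ direction is only complete once you use that for \emph{linear} left-hand sides with disjoint variables (linearity is a standing assumption of the paper) the non-variable skeleton of the unified term is exactly the superimposition of the skeleton of $\ailhs{1}$ with that of $\ailhs{2}$ shifted by $\apos$: your case split on $\bpos$ handles the first layer below a variable of $\ailhs{1}$, but positions contributed by the mgu below variables of $\ailhs{2}$ require the symmetric argument (or an induction on the unification), since ``$\ailhs{2}$ has a non-variable at $\bpos-\apos$'' does not by itself cover everything hanging underneath. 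Second, you should state explicitly that the join $\sclslub$ of the two step-clusters is the union of their position sets (clusters are closed under union) and that it is the \emph{edge} positions that enforce genuine overlap: the edge $\edgpos{\apos}$ lies in the outer cluster only when $\apos$ is a non-variable position of $\ailhs{1}$ and never lies in the inner cluster (its root edge is removed), which is exactly the paper's ``missing edge'' explanation of why variable overlaps and parallel peaks are non-critical, and which complements your observation that $\vposemp$ forces one step to sit at the root. With these two points filled in, your argument proves the equivalence that the paper only asserts.
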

\begin{proof}
  The definition of critical pair/peak varies along these
  parameters throughout the standard 
  literature~\cite{Huet:80,Ders:Joua:90,Baad:Nipk:98,Ohle:02,Tere:03}.
  The notions in the literature \emph{implement} our abstract notion.
\end{proof}
The above generalises to multi-steps~\cite{Tere:03} contracting a
number of (non-overlapping) redex-patterns at the same time.
We write $\arsdev$ ($\aiarsdev{\aStp}$) for multi-step 
(induced by cluster $\aStp$).
\begin{lemma}
  For a left-linear TRS, $\arsdeva{\atrm}{\btrm}$ iff
  $\trmeq{\atrm}{\msubstra{\vec{\amtrmvar}}{\vec{\alhs}}{\amtrm}}$ and
  $\trmeq{\msubstra{\vec{\amtrmvar}}{\vec{\arhs}}{\amtrm}}{\btrm}$, 
  for some skeleton $\amtrm$ and pattern substitution
  $\msubstr{\vec{\amtrmvar}}{\vec{\alhs}}$, 
  for rules $\overrightarrow{\rulstr{\alhs}{\arhs}}$ with $\vec{\alhs}$ standard.
\end{lemma}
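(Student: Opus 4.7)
The plan is to generalise the proof of the preceding single-step lemma from one contracted redex-pattern to a finite set of pairwise non-overlapping ones, which is precisely what a multi-step $\arsdeva{\atrm}{\btrm}$ performs simultaneously (as recorded by its associated cluster). Both directions will then follow by iterating the single-step construction.

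For the forward direction I would proceed by induction on the number $\anat$ of redex-patterns contracted. If $\anat = 0$ then $\trmeq{\atrm}{\btrm}$ and I take $\amtrm \isdefd \atrm$ with no gaps and the empty pattern substitution. If $\anat > 0$, pick an outermost contracted redex-pattern; by the single-step lemma there exist a context $\actx$ and rule $\rulstr{\alhs}{\arhs}$ (with $\alhs$ standard up to renaming) such that $\trmeq{\atrm}{\ctxa{\suba{\alhs}}}$. The remaining $\natdif{\anat}{\natone}$ redex-patterns lie either inside $\actx$ or strictly inside the arguments $\suba{\vec{\strmvar}}$ of the contracted redex; applying the induction hypothesis to each (using compatibility of multi-step with contexts and with descent into arguments) yields skeletons there. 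I then combine these skeletons by placing a fresh gap $\amtrmvar$ at the hole of $\actx$ whose pattern-substitute is $\alhs$, and pasting the skeletons obtained for the $\suba{\strmvar_\aidx}$ at the positions of the variables of $\alhs$. This builds the required $\amtrm$ and pattern substitution $\msubstr{\vec{\amtrmvar}}{\vec{\alhs}}$.

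For the backward direction, suppose $\trmeq{\atrm}{\msubstra{\vec{\amtrmvar}}{\vec{\alhs}}{\amtrm}}$ and $\trmeq{\msubstra{\vec{\amtrmvar}}{\vec{\arhs}}{\amtrm}}{\btrm}$. Linearity of $\amtrm$ in the gaps places each $\alhs_\aidx$ at a unique position in $\atrm$, and distinct gaps of $\amtrm$ sit at pairwise non-overlapping positions; hence $\vec{\alhs}$ constitutes a valid set of non-overlapping redex-patterns of $\atrm$, and contracting them all in parallel using $\overrightarrow{\rulstr{\alhs}{\arhs}}$ yields $\btrm$ in one multi-step.

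The hard part will be the bookkeeping in the inductive forward step: guaranteeing that the assembled skeleton is linear in the freshly-introduced gaps and that each gap's arity matches the number of standard variables of its pattern-substitute. Both conditions follow once each contracted redex contributes a \emph{fresh} gap whose arity equals the number of variables in the corresponding $\alhs_\aidx$, which is well-defined thanks to left-linearity of the TRS and the standardness normalisation of individual rules.
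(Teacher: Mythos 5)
The paper states this lemma without proof, leaving it as the evident generalisation of the preceding single-step lemma, whose entire proof is the construction $\amtrm \isdefd \ctxa{\mtrmvara{\suba{\vec{\strmvar}}}}$; your proposal fills in exactly that generalisation and is correct in substance. One presentational remark on the forward direction: inducting on the number of contracted redex-patterns and selecting a single outermost one forces you to apply the induction hypothesis to the redex-patterns remaining in the context $\actx$, which is not itself a term, and this is precisely where the bookkeeping you flag at the end becomes awkward. It is cleaner to induct on the derivation of $\arsdeva{\atrm}{\btrm}$ in its standard inductive presentation (reflexivity on variables, congruence under a function symbol, and the rule case $\arsdeva{\suba{\alhs}}{\subb{\arhs}}$ with $\arsdeva{\asub}{\bsub}$ pointwise): the first two cases merely assemble the skeletons obtained for the arguments after renaming their gaps apart, and the rule case introduces one fresh gap whose arity is the number of variables of the standardised $\alhs$ --- which is exactly where left-linearity enters, as you correctly observe. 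Your backward direction is fine; note only that distinct gaps of $\amtrm$ may be \emph{nested} (one occurring inside an argument of another), so ``pairwise non-overlapping'' must be read as the substituted patterns sharing no positions, not as the occupied subtrees being disjoint --- this is also why the resulting set of redex-patterns is a legitimate multi-step rather than merely a parallel step.
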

\begin{definition}
  A peak
  $\iarsdevinva{\aStp}{\btrm}{\iarsdeva{\bStp}{\atrm}{\ctrm}}$ 
  is \emph{critical} if 
  $\clseq{\clslub{\aStp}{\bStp}}{\clstop}$ with $\atrm$ standard,
  where we extend the refinement order to multi-steps via their left-hand side.
\end{definition}
That is, the same concise ($5$ symbols) 
definition of critical peak as before allows us to capture
all the notions of parallel critical peaks and development critical peaks
(having definitions of up to $2$ pages),
due to Gramlich, Toyama, Okui, and Felgenhauer from the literature.
\begin{example}
  The following are critical peaks in our sense:
  \begin{itemize}
  \item
    the parallel (one--parallel) critical peak 
    $\arsinva{\csymzer}{\arspra{\syma{\asymzer,\asymzer}}{\syma{\bsymzer,\bsymzer}}}$
    for rules $\rulstr{\syma{\asymzer,\asymzer}}{\csymzer}$, $\rulstr{\asymzer}{\bsymzer}$;
  \item
    the development (one--multi) critical peak
    $\arsinva{\symb{\csymzer}
       }{\arsdeva{\symb{\syma{\asymzer}}
                }{\bsymzer
                }}$
    for rules $\rulstr{\syma{\asymzer}}{\csymzer}$, 
                  $\rulstr{\symb{\syma{\trmvar{\natone}}}}{\trmvar{\natone}}$,
                  $\rulstr{\asymzer}{\bsymzer}$; and
  \item
    the multi--multi critical peaks
        $\arsdevinva{\syma{\symap{\op{\bsym}{\anat}}{\trmvar{\natone}}}
          }{\arsdeva{\symap{\op{\asym}{\natsuc{\nattwo\anat}}}{\trmvar{\natone}}
                   }{\symap{\op{\bsym}{\anat}}{\syma{\trmvar{\natone}}}
                   }}$
    for $\rulstr{\syma{\syma{\trmvar{\natone}}}}{\symb{\trmvar{\natone}}}$.
 \end{itemize}
\end{example}
\begin{lemma}[Critical peak]
   If $\iarsdevinva{\aStp}{\btrm}{\iarsdeva{\bStp}{\atrm}{\ctrm}}$
   having more than one redex-pattern (in total),
   \begin{itemize}
   \item
      $\clseq{\clslub{\aStp}{\bStp}}{\clstop}$:
      the peak is a variable-substitution instance of a \emph{critical} peak;\footnote{% 
Note that if $\clseq{\clslub{\acls}{\bcls}}{\clstop}$, then 
$\clsneq{\acls,\bcls}{\clsbot}$ iff $\clsneq{\clsglb{\acls}{\bcls}}{\clsbot}$
by connectedness and downward closedness of clusters.
}
      or
   \item
      $\clsneq{\clslub{\aStp}{\bStp}}{\clstop}$:
      $\trmeq{\aStp}{\substra{\atrmvar}{\aiStp{\natone}}{\aiStp{\natzer}}}$ and
      $\trmeq{\bStp}{\substra{\atrmvar}{\biStp{\natone}}{\biStp{\natzer}}}$, 
      for peaks
      $\iarsdevinva{\aiStp{\aidx}}{\bitrm{\aidx}}{\iarsdeva{\biStp{\aidx}}{\aitrm{\aidx}}{\citrm{\aidx}}}$
      with $\setin{\aidx}{\setstr{\natzer,\natone}}$, having \emph{smaller} skeletons.
   \end{itemize}
\end{lemma}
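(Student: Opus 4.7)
The plan is to split on whether $\clseq{\clslub{\aStp}{\bStp}}{\clstop}$, leveraging the isomorphism established in the previous theorem so that joins are computed on the geometric side as simple set unions of vertex/edge positions.  In the top case the peak already is a critical peak up to standardisation of the source; in the non-top case the failure of the join to be $\clstop$ pinpoints a vertex or edge of $\atrm$ at which both multi-steps admit a clean factorisation into strictly smaller peaks.

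For the top case, the union of the redex-patterns of $\aStp$ and $\bStp$ covers every internal position of $\atrm$ and moreover forms a \emph{single} connected pattern, for otherwise the join would split into two or more components, contradicting equality with the single-pattern $\clstop$.  Hence $\atrm$ is determined, up to the choice of subterms filling the variable slots of the rule left-hand sides occurring in that big pattern, by $\aStp$ and $\bStp$ alone.  Replacing those subterms by a standard list $\trmvar{\natone},\ldots,\trmvar{\anat}$ of fresh variables produces a standard source $\atrm'$ and multi-steps $\aStp'$, $\bStp'$ whose join is still $\clstop$: this is a critical peak, of which the original peak is recovered by substituting the original subterms back for the fresh variables.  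Well-definedness of $\aStp'$, $\bStp'$ uses left-linearity together with the fact that the replaced subterms lie outside every redex-pattern, so standardisation touches nothing in the rules themselves.

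For the non-top case, pick a join-irreducible element $p$ of the refinement lattice of $\atrm$---by the Birkhoff-style representation of the previous theorem, either a vertex or an edge of $\atrm$---with $p$ present in $\clstop$ but absent from $\clslub{\aStp}{\bStp}$.  If $p$ is a vertex at position $\apos$, the function symbol of $\atrm$ at $\apos$ lies outside every redex-pattern and so appears, as a function symbol rather than a gap, in both skeletons $\amtrm$ of $\aStp$ and $\bmtrm$ of $\bStp$.  If $p$ is an edge $\possco{\apos}{\aedgidx}$, no pattern of either step spans it, so the child $\possco{\apos}{\aidx}$ can be cleanly detached from its parent in both skeletons.  Introducing a fresh $1$st-order variable $\atrmvar$, we factor each skeleton as $\trmeq{\amtrm}{\substra{\atrmvar}{\aimtrm{\natone}}{\aimtrm{\natzer}}}$ and $\trmeq{\bmtrm}{\substra{\atrmvar}{\bimtrm{\natone}}{\bimtrm{\natzer}}}$, where index $\natzer$ carries the part strictly outside the subtree determined by $p$ (plugged with $\atrmvar$) and index $\natone$ the part strictly inside.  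Restricting the pattern-substitutions accordingly yields smaller peaks $\iarsdevinva{\aiStp{\aidx}}{\bitrm{\aidx}}{\iarsdeva{\biStp{\aidx}}{\aitrm{\aidx}}{\citrm{\aidx}}}$ for $\setin{\aidx}{\setstr{\natzer,\natone}}$, each having lost at least the function symbol or edge at $p$ from its skeleton.

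The main obstacle will be establishing \emph{cleanness} of the cut, i.e.\ that no redex-pattern of either $\aStp$ or $\bStp$ straddles $p$, so that every pattern lies entirely on one side.  In the vertex sub-case this is immediate by connectedness of patterns: a pattern meeting both sides would have to contain the cut vertex itself.  In the edge sub-case it additionally uses that the underlying graph is a tree, so the only arc between the two endpoints runs through the missing edge; connectedness again forces inclusion of the edge, contradicting its absence from the join.  Once cleanness is secured, the verification that $\aiStp{\aidx}$, $\biStp{\aidx}$ substitute back to recover $\aStp$, $\bStp$ is routine, following from the substitution lemma implicit in the cluster algebra of the previous proof.
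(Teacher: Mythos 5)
The paper states this lemma \emph{without proof} (it is followed only by a sketch of how it is applied), so there is no official argument to measure yours against; I can only assess your proposal on its own terms and against the machinery the paper sets up. Your overall strategy is the one that machinery is clearly designed to support: split on whether $\clseq{\clslub{\aStp}{\bStp}}{\clstop}$, standardise the source in the top case, and in the non-top case cut at a missing join-irreducible, using connectedness and endpoint-closure of patterns to guarantee that no redex-pattern straddles the cut. The top case is essentially right, with one imprecision worth noting: when the join is $\clstop$ the positions of $\atrm$ lying outside every redex-pattern are exactly its variable occurrences (any function symbol there would contribute an uncovered internal vertex), so what you call ``replacing those subterms by fresh variables'' is really just renaming/unsharing the $1$st-order variables of the skeleton; the subterms filling the variable slots of the left-hand sides may themselves contain further patterns and must not be touched.

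The genuine gap is in the vertex sub-case of your non-top argument. If the missing join-irreducible is the root vertex $\vposemp$, then ``the subtree determined by $p$'' is all of $\atrm$, your factorisation degenerates to an index-$\natzer$ component that is just the variable $\atrmvar$ and an index-$\natone$ peak that \emph{is} the original peak, so nothing has become smaller; more generally, you never establish that a cut point yielding \emph{two} strictly smaller skeletons exists. The repair is to show that a missing internal \emph{edge} always exists and to cut only at edges. Indeed, clusters are closed under taking endpoints of edges, so if every internal edge of $\atrm$ belonged to $\clslub{\aStp}{\bStp}$ then so would every non-root internal vertex (each is the lower endpoint of the internal edge above it), and also the root vertex as soon as the root symbol has a non-variable argument; the join could then fail to be $\clstop$ only for a source $\symap{\asym}{\vec{\btrmvar}}$ with exclusively variable arguments, whose single internal position leaves no room for the assumed two redex-patterns. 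Cutting at a missing internal edge $\possco{\apos}{\aedgidx}$ gives an outer part containing the symbol at $\apos$ and an inner part containing the symbol at $\possco{\apos}{\aidx}$ (the subterm there is not a variable, precisely because the edge is internal), so both skeletons shrink, and your connectedness argument then does show that every pattern of either multi-step lies entirely on one side. With that substitution your proof goes through.
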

We sketch how the lemma allows to prove many 
confluence-by-critical-pair-analysis results by induction on the size
of the skeleton and splitting in these two cases.
\begin{example}
  \begin{itemize}
  \item
    the critical pair lemma for (left-linear) TRSs follows in the first
    case by the assumption that critical pairs are joinable, 
    and in the second case by the induction hypothesis (twice)
    and then using that reduction is closed under substitution
    to \emph{recompose};
  \item
    that orthogonal TRSs are confluent follows by proving that multi-steps
    have the diamond property, with the first case being trivial 
    since each critical peak is trivial by orthogonality, and 
    concluding in the second case by the induction hypothesis (twice)
    and then using that multi-steps are closed under substitution
    to \emph{recompose} the multi-steps; and
  \item
    that development-closed TRSs are confluent is proven by refining
    the proof of the previous item by an extra (outer) induction on
    the \emph{amount} of overlap between the multi-steps.
    For that it is essential that the refinement order is 
    a \emph{distributive} lattice, as that allows to express the
    amount of overlap between two multi-steps as 
    the sum of the amounts of overlap (the sizes of the meets) 
    between their constituting redex-patterns. 
  \end{itemize}
\end{example}
We expect the above extends to non-left-linear,\footnote{%
Although the definition of critical peak remains the same,
we then do not get a distributive lattice.} higher-order pattern, 
and graph rewriting.

\paragraph{Acknowledgements.}
This work has been partially supported by JSPS KAKENHI Grant Number 17K00011,
JSPS Core to Core Program, and Austrian Science Fund (FWF) project P27528.

\bibliographystyle{plain}
\bibliography{cpr-iwc2017}

\end{document}